\newcommand{\bbr}{\mathbb{R}}
\newcommand{\bbp}{\mathbb{P}}
\newcommand{\bbn}{\mathbb{N}}
\newcommand{\norm}[1]{\left\| #1 \right\|}
\DeclareMathOperator{\Var}{Var}
\DeclareMathOperator{\Cov}{Cov}
\definecolor{flood1}{RGB}{169,208,142}
\definecolor{flood2}{RGB}{255,242,204}
\definecolor{flood3}{RGB}{255,153,204}
\newcommand{\blind}{0}
\theoremstyle{plain}
\newtheorem{theorem}{Theorem}[section]
\newtheorem{proposition}[theorem]{Proposition}
\theoremstyle{definition}
\newtheorem{definition}[theorem]{Definition}
\theoremstyle{remark}
\newtheorem{remark}[theorem]{Remark}
\journal{Computational Statistics \& Data Analysis}
\begin{document}

\begin{frontmatter}

\title{Generalized Ordinal Patterns Allowing for Ties and Their Applications in Hydrology }
%\tnotetext[mytitlenote]{Detailed material on the ordinal patterns of flood classes is given in the online supplement.}
%% Group authors per affiliation:
%\author{Alexander Schnurr\fnref{myfootnote}}
%\address{Department Mathematik, Universit\"at Siegen, D-57072 Siegen, Germany}
%\fntext[myfootnote]{Since 1880.}

%% or include affiliations in footnotes:
\author[mymainaddress]{Alexander Schnurr}
\ead{schnurr@mathematik-uni-siegen.de}

\author[mysecondaryaddress]{Svenja Fischer\corref{mycorrespondingauthor}}
\cortext[mycorrespondingauthor]{Corresponding author}
\ead{svenja.fischer@rub.de}

\address[mymainaddress]{Department Mathematik, Universit\"at Siegen, D-57072 Siegen, Germany}
\address[mysecondaryaddress]{SPATE research unit, Ruhr-University Bochum, D-44801 Bochum, Germany. Phone: +49 234 - 32 28961}

\begin{abstract}
When using ordinal patterns, which describe the ordinal structure within a data vector, the problem of ties appeared permanently. 
So far, model classes were used which do not allow for ties; randomization has been another attempt to overcome this problem. Often, time periods with constant values even have been counted as times of monotone increase. 
To overcome this, a new approach is proposed: it explicitly allows for ties and, hence, considers more patterns than before.    
Ties are no longer seen as nuisance, but to carry valuable information.    
Limit theorems in the new framework are provided, both, for a single time series and for the dependence between two time series. 

The methods are used on hydrological data sets. It is common to distinguish five flood classes (plus `absence of flood'). Considering data vectors of these classes at a certain gauge in a river basin, one will usually encounter several ties. Co-monotonic behavior between the data sets of two gauges (increasing, constant, decreasing) can be detected by the method as well as spatial patterns. Thus, it helps to analyze the strength of dependence between different gauges in an intuitive way. This knowledge can be used to asses risk and to plan future construction projects.
\end{abstract}

\begin{keyword}
ordinal data analysis \sep discrete time series \sep limit theorems\sep flood events
\MSC[2010] 62M10 \sep  62H20 \sep 62F12\sep 60F05 \sep 05A05 \sep 62G30
\end{keyword}

\end{frontmatter}

\section{Introduction}
\label{sec:intro}

\if1\blind{In our paper \cite{fis-sch-sch17}, we }\fi
\if0\blind{In the paper \cite{fis-sch-sch17}, the authors }\fi
have suggested to use so called ordinal patterns for the analysis of hydrological time series. The origin of the concept of ordinal pattern lies in the theory of dynamical systems \cite[cf.][]{bandt05,ban-shi07}. Ordinal patterns have been used in order to analyze the entropy of data sets \cite{ban-pom02} and to estimate the Hurst-parameter of long-range dependent time series \cite[cf.][]{kel-sin10,LRD20}. Additionally, these methods have proved to be useful in the context of EEG-data in medicine \cite[cf.][]{kel-una-una14}, index data in finance \cite[cf.][]{schnurr2014} and flood data in extreme value theory \cite[cf.][]{oestingschnurr2020}. 

The classical approach in \emph{ordinal pattern analysis} works as follows: One decides for a small number $n\in\bbn$, that is, for the length of the data windows under consideration. In each window, only the ordinal information is considered. There are various ways to encode the ordinal information of an $n$-dimensional vector in a pattern. The one that is used most often reads as follows: 

Let $S_{n}$ denote the set of permutations of $\{1, \ldots, n\}$, which we write as $n$-tuples containing each of the numbers $1, \ldots, n$ exactly one time.
By the \emph{ordinal pattern of order $n$} of the vector ${\bf x}=(x^1,...,x^n)$ we refer to the permutation
\begin{align*}
	\Pi(x^1, \ldots, x^n):=(\pi^1,\ldots, \pi^n)\in S_n
\end{align*}
which satisfies
\begin{align*}
	x^{\pi^1}\geq \ldots\geq x^{\pi^n}.
\end{align*}
%eventuell noch eine umschreibung in umgangssprache

When using this definition, it is often assumed that the probability of coincident values within the vector $(x^1,\ldots, x^n)$ is zero. 

%XXX hier schon den schaetzer, denn nur beim schaetzen tauchen die probleme mit daten auf

However, in high-frequency data (mathematical finance) and in categorical data (e.g. hydrology) one will encounter ties between the values. The following approaches have been used in order to overcome this problem in estimation procedures: \\
a) skip the respective data points (leave out these vectors) \\
b) randomize, e.g. by adding a small noise to the data, in order to avoid ties \\
c) add the following to the definition above: 
\begin{center}
	...and $\pi^{j-1}>\pi^j$ if $x^{\pi^{j-1}}=x^{\pi^j}$ for $j\in\{2,\ldots, n\}$.
\end{center}
Each of these `solutions'  has its own drawbacks. In case a) we might loose a lot of information. Considering categorial data sets with a small number of categories, we could even loose most of the information. In case b) we might underestimate co-movement of two data sets (see below) and in case c) the vectors $(4,4,4,4)$ and $(1,10,100,1000)$ are mapped on the same pattern $(4,3,2,1)$, hence, they are considered to be equal from the ordinal point-of-view. 

In earlier papers on the subject, data sets were used which only contained very few vectors with ties. Having categorial data in mind, one could consider cases where the length of patterns exceeds the number of categories. In this case (and there are indeed applications, where this approach is canonical, see below), every single vector has to have ties(!). 

Here, we present a new approach on how to overcome the problem of ties in a natural way. We \emph{explicitly allow for ties and assign a larger number of patterns with the data}. Ties are not seen as being disruptive; they are rather considered to carry valuable information. 

There are various ways on how to describe the patterns one finds in the classical approach. We have already recalled the one which is used most often. However, a different one is much more suitable to be used in the generalized form we have in mind. 

This works as follows: %eventuell unten nochmal als formale Def um es zitierbar zu haben

Consider a vector of $n$ consecutive values $(x^1,...,x^n)$. 
Write down the values, which are attained $(y^1,...,y^m)$ such that $y^1<y^2<...<y^m$. 
The number of different values attained in $x^1$, ..., $x^n$ is $m\in\bbn$. 
Now define the generalized pattern of $(x^1,...,x^n)$ to be the vector ${\bf t}=(t^1,...,t^n) \in \bbn^n$ such that
\[
t^j=k \text{ if and only if } x^j=y^k.
\] 

By this definition, we obtain that the vector $(1,2,4,3)$ yields the pattern ${\bf t}=(1,2,4,3)$ which is very convenient. Using strictly monotone transformations $\bbn\to\bbr$ one obtains all elements in $\bbr^4$ having this pattern. We write $\Psi$ for the function which assigns the generalized pattern with each vector in contrast to $\Pi$ which has been used in the classical case. The set of all generalized patterns of order $n$ is denoted by $T_n$ (in contrast to the classical permutations $S_n$). 

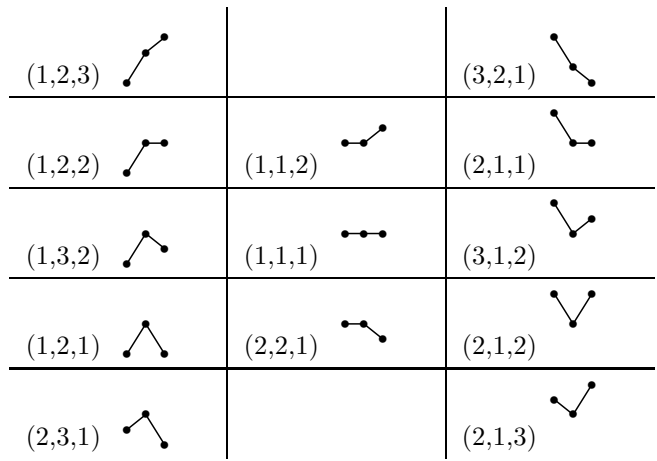
\begin{figure}[h!] \label{13patterns}
	\begin{align*}
		\begin{tabular}{c|c|c} 
			(1,2,3) \setlength{\unitlength}{1cm}
			\scalebox{1} % Change this value to rescale the drawing.
			{
				\begin{pspicture}(0,0)(1,1)
					\psline[linewidth=0.02](0,0)(0.25,0.4)(0.5,0.6)
					\psdots[dotsize=0.1](0,0)(0.25,0.4)(0.5,0.6)
				\end{pspicture} 
			}
			& 
			& (3,2,1)
			\scalebox{1} % Change this value to rescale the drawing.
			{
				\begin{pspicture}(0,0)(1,1)
					\psline[linewidth=0.02](0,0.6)(0.25,0.2)(0.5,0.0)
					\psdots[dotsize=0.1](0,0.6)(0.25,0.2)(0.5,0.0)
				\end{pspicture} 
			} \\ \hline
			%%%%%%%%%%%%%%%%%%% zweite zeile %%%%%%%%%%%%%%%%%%%%%%%%%%%%%%%%%%%%%
			(1,2,2) \setlength{\unitlength}{1cm}
			\scalebox{1} % Change this value to rescale the drawing.
			{
				\begin{pspicture}(0,0)(1,1) 
					\psline[linewidth=0.02](0,0)(0.25,0.4)(0.5,0.4)
					\psdots[dotsize=0.1](0,0)(0.25,0.4)(0.5,0.4)
				\end{pspicture} 
			}
			& (1,1,2) \setlength{\unitlength}{1cm}
			\scalebox{1} % Change this value to rescale the drawing.
			{
				\begin{pspicture}(0,0)(1,1)
					\psline[linewidth=0.02](0,0.4)(0.25,0.4)(0.5,0.6)
					\psdots[dotsize=0.1](0,0.4)(0.25,0.4)(0.5,0.6)
				\end{pspicture} 
			}
			& (2,1,1)
			\scalebox{1} % Change this value to rescale the drawing.
			{
				\begin{pspicture}(0,0)(1,1)
					\psline[linewidth=0.02](0,0.8)(0.25,0.4)(0.5,0.4)
					\psdots[dotsize=0.1](0,0.8)(0.25,0.4)(0.5,0.4)
				\end{pspicture} 
			} \\ \hline
			%%%%%%%%%%%%%%%%%%% dritte zeile %%%%%%%%%%%%%%%%%%%%%%%%%%%%%%%%%%%%%
			(1,3,2) \setlength{\unitlength}{1cm}
			\scalebox{1} % Change this value to rescale the drawing.
			{
				\begin{pspicture}(0,0)(1,1) 
					\psline[linewidth=0.02](0,0)(0.25,0.4)(0.5,0.2)
					\psdots[dotsize=0.1](0,0)(0.25,0.4)(0.5,0.2)
				\end{pspicture} 
			}
			& (1,1,1) \setlength{\unitlength}{1cm}
			\scalebox{1} % Change this value to rescale the drawing.
			{
				\begin{pspicture}(0,0)(1,1)
					\psline[linewidth=0.02](0,0.4)(0.25,0.4)(0.5,0.4)
					\psdots[dotsize=0.1](0,0.4)(0.25,0.4)(0.5,0.4)
				\end{pspicture} 
			}
			& (3,1,2)
			\scalebox{1} % Change this value to rescale the drawing.
			{
				\begin{pspicture}(0,0)(1,1)
					\psline[linewidth=0.02](0,0.8)(0.25,0.4)(0.5,0.6)
					\psdots[dotsize=0.1](0,0.8)(0.25,0.4)(0.5,0.6)
				\end{pspicture} 
			} \\ \hline
			%%%%%%%%%%%%%%%%%%% vierte zeile %%%%%%%%%%%%%%%%%%%%%%%%%%%%%%%%%%%%%
			(1,2,1) \setlength{\unitlength}{1cm}
			\scalebox{1} % Change this value to rescale the drawing.
			{
				\begin{pspicture}(0,0)(1,1) 
					\psline[linewidth=0.02](0,0)(0.25,0.4)(0.5,0.0)
					\psdots[dotsize=0.1](0,0)(0.25,0.4)(0.5,0.0)
				\end{pspicture} 
			}
			& (2,2,1) \setlength{\unitlength}{1cm}
			\scalebox{1} % Change this value to rescale the drawing.
			{
				\begin{pspicture}(0,0)(1,1)
					\psline[linewidth=0.02](0,0.4)(0.25,0.4)(0.5,0.2)
					\psdots[dotsize=0.1](0,0.4)(0.25,0.4)(0.5,0.2)
				\end{pspicture} 
			}
			& (2,1,2)
			\scalebox{1} % Change this value to rescale the drawing.
			{
				\begin{pspicture}(0,0)(1,1)
					\psline[linewidth=0.02](0,0.8)(0.25,0.4)(0.5,0.8)
					\psdots[dotsize=0.1](0,0.8)(0.25,0.4)(0.5,0.8)
				\end{pspicture} 
			} \\ \hline
			%%%%%%%%%%%%%%%%%%% fuenfte zeile %%%%%%%%%%%%%%%%%%%%%%%%%%%%%%%%%%%%%
			(2,3,1) \setlength{\unitlength}{1cm}
			\scalebox{1} % Change this value to rescale the drawing.
			{
				\begin{pspicture}(0,0)(1,1) 
					\psline[linewidth=0.02](0,0.2)(0.25,0.4)(0.5,0.0)
					\psdots[dotsize=0.1](0,0.2)(0.25,0.4)(0.5,0.0)
				\end{pspicture} 
			}
			& 
			& (2,1,3)
			\scalebox{1} % Change this value to rescale the drawing.
			{
				\begin{pspicture}(0,0)(1,1)
					\psline[linewidth=0.02](0,0.6)(0.25,0.4)(0.5,0.8)
					\psdots[dotsize=0.1](0,0.6)(0.25,0.4)(0.5,0.8)
				\end{pspicture} 
			} \\ 
		\end{tabular}
	\end{align*}
	\caption{The 13 generalized patterns of length 3}
\end{figure}

As in the case of classical ordinal patterns, one can think of the strings (of numbers) as an archetype structure (see Figure \ref{13patterns} for the case $n=3$).

If ties do not appear, the question of how many patterns one has to consider boils down to the classical case: in how many ways can one write down the numbers 1,2, ..., n. It is well known that this number is $n!=n\cdot (n-1)\cdot ...\cdot 1$. For the generalized patterns we obtain 3 for $n=2$, 13 for $n=3$ and 75 for $n=4$. 

Let us consider the number of patterns under consideration for general $n$: this number is equivalent to the number of ways $n$ competitors can rank in a competition if each competitor is identifiable and if we allow for the possibility of ties. This series in $\bbn$ is known under the name \emph{Fubini numbers}. 
In calculating these numbers one first decides how many first places, second places, third places and so on appear (e.g. (1,1,2,2,3)). For each of these possibilities one then calculates the number of re-arrangements of vectors of this kind (e.g. (1,2,1,3,2)). For the readers' convenience we recall the first 7 Fubini numbers in Figure 2.

\begin{figure}[h] \begin{center}
		\begin{tabular}{|l|l|l|l|l|l|l|l} \hline
			1 & 2 & 3 & 4 & 5 & 6 & 7 & \\  \hline
			1 & 3 & 13 & 75 & 541 & 4683 & 47293 &  \\ \hline
		\end{tabular}
		\caption{The first Fubini numbers}
\end{center}\end{figure}

The name comes from the fact that the numbers coincide with the numbers of possibilities on how to calculate $n$-dimensional integrals iteratively using Fubini's theorem. More information on these numbers can be found at oeis.com. While writing this paper, we were informed that the authors of \cite{Unakavova.2013} have also considered patterns with ties. However, they do not prove any limit theorems or use them in order to analyze dependence structures. They do calculate the number of patterns \cite[cf.][Appendix]{Unakavova.2013}. 

Analyzing the patterns within one time series, we can describe periodicities, times of monotonicity and detect instationarities in a convenient way, just to name a few applications. In the next section, we use ordinal patterns in order to describe the dependence between two time series. Classically, this is often done by using correlation. However, our method works even if the models under consideration do not have second moments.

%[Moving window beschreiben]
%We count how often we find coincident patterns. Coincident patterns do mean that for the given length $n$ (of the time windows) the up-and-down behaviour of the two time series is similar within the two synchronous windows. E.g. if we have (4,2,3,1) this means we start on a low value, increase, go back to a point in between the first two and then have the highest value in the end. 

The notation we are using is more or less standard. 
$(\Omega,\mathcal{F},\mathbb{P})$ denotes a probability space in the background. 
We write $\mathbb{E}$ for the expected value w.r.t. $\mathbb{P}$, 
while $\mathbb{N}$ denotes the positive integers starting with one.

The paper structure is as follows: first, the generalized ordinal patterns are incorporated in a novel measure for dependence and respective sample estimators are introduced (Section \ref{Sec:Methods}). For these measures, limit theorems are provided in Section \ref{Sec:LT}. There, we differentiate between two cases: the generalized ordinal pattern dependence between two time series (in this case for short range dependent data) and the spatial consideration, where the pattern is considered as shifting vector over time. With these two cases, the following application of flood classes in a temporal and spatial manner is supported theoretically (Section \ref{sec:application}).

\section{Methodology}\label{Sec:Methods}

In this section, the above-introduced generalization of the classical ordinal patterns is used to construct a new measure for dependence: \emph{the ordinal pattern dependence}. This dependence measure makes use of the generalized ordinal patterns to obtain information of the co-movement of two time series. It can be extended by a metrical approach that weakens the assumption of strict equality of patterns. Finally, sample estimators for the different measures are given. 
%In the second subsection a single pattern vector over time is considered.  (erst in dritter Sektion)

\subsection{Generalized Ordinal Pattern Dependence}

By now we have only considered a single time series, respectively a single data set. Additionally, we would like to use our generalized patterns (with ties) in order to analyze the dependence between time series, respectively data sets \cite[cf.][]{schnurr2014,schnurrdehling2016,BDMS}). The main idea is that a co-monotonic behavior between two data sets leads to an increased number of coincident patterns at the same instants in time. 

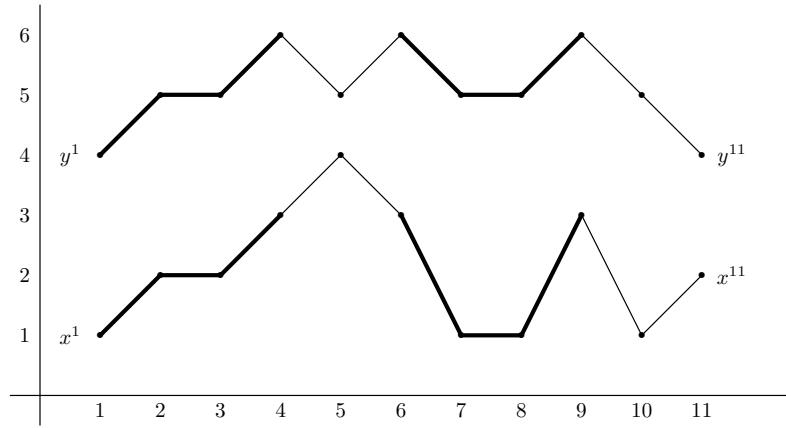
\begin{figure}[h]\begin{center}
		\scalebox{0.8} % Change this value to rescale the drawing.
		{
			\begin{pspicture}(0,0)(11,6)
				\psline[linewidth=0.02](-0.5,0)(12.6,0)
				\psline[linewidth=0.02](0,-0.5)(0,6.5)
				\psline[linewidth=0.02](1,4)(2,5)(3,5)(4,6)(5,5)(6,6)(7,5)(8,5)(9,6)(10,5)(11,4)	
				\psline[linewidth=0.02](1,1)(2,2)(3,2)(4,3)(5,4)(6,3)(7,1)(8,1)(9,3)(10,1)(11,2)	
				\psdots[dotsize=0.1](1,4)(2,5)(3,5)(4,6)(5,5)(6,6)(7,5)(8,5)(9,6)(10,5)(11,4)
				\psdots[dotsize=0.1](1,1)(2,2)(3,2)(4,3)(5,4)(6,3)(7,1)(8,1)(9,3)(10,1)(11,2)
				\psline[linewidth=0.07](1,4)(2,5)(3,5)(4,6)
				\psline[linewidth=0.07](6,6)(7,5)(8,5)(9,6)
				\psline[linewidth=0.07](1,1)(2,2)(3,2)(4,3)
				\psline[linewidth=0.07](6,3)(7,1)(8,1)(9,3)
				\rput(1,-0.25){1}
				\rput(2,-0.25){2}
				\rput(3,-0.25){3}
				\rput(4,-0.25){4}
				\rput(5,-0.25){5}
				\rput(6,-0.25){6}
				\rput(7,-0.25){7}
				\rput(8,-0.25){8}
				\rput(9,-0.25){9}
				\rput(10,-0.25){10}
				\rput(11,-0.25){11}
				\rput(-.25,1){1}
				\rput(-.25,2){2}
				\rput(-.25,3){3}
				\rput(-.25,4){4}
				\rput(-.25,5){5}
				\rput(-.25,6){6}	  
				\rput(0.5,1){$x^1$}
				\rput(0.5,4){$y^1$}
				\rput(11.5,2){$x^{11}$}
				\rput(11.5,4){$y^{11}$}
			\end{pspicture} 
		}
		\caption{Two categorial data sets with partially co-monotonic behavior}
\end{center} \end{figure}

Let $X=(X_k)_{k\geq 1}$ and $Y=(Y_k)_{k\geq 1}$ be two stationary time series; let us mention that, although stationarity is \emph{the} classical assumption in this context, our results hold under weaker conditions: stationary increments or even ordinal pattern stationary, that is, stationarity of the pattern probabilities over time, are enough. 
%[XXX ist $\bbz$ fuer spaeter besser??]

In order to quantify the co-monotonic behavior of two time series, we analyze the probability of coincident patterns at the same instants in time. Since the time series are stationary (or have at least stationary pattern probabilities), it is enough to consider the time points 1 to $n$ 
\[
p_{(X,Y)}:=\bbp\Big(\Psi(X_1, X_2,..., X_n)=\Psi(Y_1,Y_2,...,Y_n)\Big)
\]
and compare this with the hypothetical case of independence 
\[
q_{(X,Y)}:=\sum_{{\bf t}\in T_n} \bbp(\Psi(X_1, X_2,..., X_n)={\bf t})\cdot \bbp(\Psi(Y_1,Y_2,...,Y_n)={\bf t}).
\]
We call $q_{(X,Y)}$ comparison value. If we want to consider an anti-monotonic behavior instead, we consider  $r_{(X,Y)}:=p_{(X,-Y)}$ and $s_{(X,Y)}:=q_{(X,-Y)}$. 
If the time series under consideration are fixed, we omit the subscripts. 
We say that the two time series exhibit generalized positive ordinal pattern dependence if $p>q$; and generalized negative ordinal pattern dependence if $r>s$. 
In analogy to the classical case \cite[][Formula (5)]{schnurrdehling2016} we can define a standardized ordinal pattern coefficient
\begin{align}
	\text{ord}(X,Y):=\left(\frac{p-q}{1-q}\right)^+ - \left(\frac{r-s}{1-s}\right)^+.
\end{align}
This coefficient has been compared in the classical case with other concepts of multivariate dependence in \cite{BDMS}. It admits values between -1 and 1, zero in case of independence and 1 (resp. -1) in case of total positive (resp. negative) dependence.

\subsection{Generalized Average Weighted Ordinal Pattern Dependence}

When dealing with classical ordinal patterns (not allowing for ties), it has proved to be useful to consider not only identical patterns when analyzing co-movement, but to also consider \emph{almost similar patterns}. This generalization of the method is particularly useful, if the patterns are long (big values of $n$) and hence more sensitive to noise. Defining a metric on the patterns (and a weight function) was straight forward in the case of classical patterns without ties. On the space of classical ordinal patterns, that is, permutations,  there exist several metrics which have proved to be useful in different areas of statistics (cf. \cite{dezahuang1998}). When analyzing co-movement, the metric related to the $L_1$-norm has proved to be most suitable. This is due to the fact that permutations are close to each other (measured with this metric) if and only if, high values of one vector correspond to high values of the other vector; and the same holds true for low values. 
However, this metric which was proposed by \cite{schnurrdehling2016}  is not sufficient in our new context. Hence, we have to develop a metric that takes into account the special characteristics of generalized patterns. 

Let us first motivate the new metric: 
Consider the data vectors ${\bf v}_1=(5,5,5,5)$, ${\bf v}_2=(5,5,5,6)$ and ${\bf v}_3=(5,5,5,4)$ of length 4. Intuitively, the differences between the patterns of $\bf{v}_1$ and $\bf{v}_2$ respectively the patterns of $\bf{v}_1$ and $\bf{v}_3$ should coincide. We obtain the patterns: ${\bf t}_1=(1,1,1,1)$, ${\bf t}_2=(1,1,1,2)$ and ${\bf t}_3=(2,2,2,1)$. Applying the $L_1$-metric would lead to different distances: 
\[
\norm{{\bf t}_2-{\bf t}_1}_{1}=1 \text{ and } \norm{{\bf t}_3-{\bf t}_1}_1=3.
\]
In order to overcome this, we consider each pattern as an equivalence class of vectors. Loosely speaking, $(1,1,2,2)$ should be equivalent to $(2,2,3,3)$ and hence, it should be close to $(1,2,3,3)$. Let us emphasize that $(2,2,3,3)$ is not a pattern any more. More precisely, we want to consider not only the pattern itself, but any shift caused by an addition of $k\cdot(1,1,1,1)$ with $k\in \mathbb{Z}$. In the example given here this would mean that we treat the vector ${\bf w}=(2,2,2,2)$ equally to ${\bf t}_1$ such that if we compare ${\bf w}$ and ${\bf t}_3$ instead of ${\bf t}_1$ and ${\bf t}_3$ they have the same distance of 1 as ${\bf t}_1$ and ${\bf t}_2$.

Consequently, we define a new metric for generalized ordinal patterns $d_f$ as follows:

\begin{definition}
Let ${\bf t}, {\bf u}\in T_n$. Then
\begin{align*}
	d_f({\bf t},{\bf u})=\min\limits_{k\in \mathbb{Z}}(\Vert {\bf t}+k\cdot {\bf e}_n- {\bf u}\Vert_{1}), 
\end{align*}
where ${\bf e}_n$ is the unit pattern $(1,\ldots,1)$ of length $n$.
\end{definition}

\begin{proposition}
The mapping $d_f:T_n\times T_n\to \mathbb{R}$ is a metric.
\end{proposition}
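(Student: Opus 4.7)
The plan is to verify the three metric axioms (positivity with identity of indiscernibles, symmetry, triangle inequality) in order, leveraging the fact that $\|\cdot\|_{1}$ is itself a norm and that the minimum in the definition of $d_f$ is attained. To see that the minimum is attained, note that $k\mapsto \|\mathbf{t}+k\mathbf{e}_n-\mathbf{u}\|_{1}$ is a convex, piecewise-affine function on $\mathbb{R}$ that tends to $+\infty$ as $|k|\to\infty$ (the coefficient of $k$ for large $|k|$ is $\pm n$), so only finitely many integer shifts are candidates and a minimizer $k^*\in\mathbb{Z}$ exists.

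For non-negativity, each $\|\mathbf{t}+k\mathbf{e}_n-\mathbf{u}\|_{1}\ge 0$, so the minimum is non-negative. Clearly $d_f(\mathbf{t},\mathbf{t})=0$ by taking $k=0$. For the converse, this is the step I would treat most carefully: suppose $d_f(\mathbf{t},\mathbf{u})=0$; then there exists $k\in\mathbb{Z}$ with $\mathbf{t}+k\mathbf{e}_n=\mathbf{u}$. Here one has to use the normalization built into $T_n$: every generalized pattern in $T_n$ has minimum coordinate equal to $1$, because in the construction the ordered attained values are indexed by $1,\ldots,m$. Hence $\min_j t^j=\min_j u^j=1$, which forces $k=0$ and therefore $\mathbf{t}=\mathbf{u}$. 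This normalization observation is the only non-routine ingredient of the proof.

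Symmetry follows from the substitution $k\mapsto -k$ together with $\|\cdot\|_{1}=\|-\cdot\|_{1}$:
\begin{align*}
d_f(\mathbf{t},\mathbf{u}) = \min_{k\in\mathbb{Z}}\|\mathbf{t}+k\mathbf{e}_n-\mathbf{u}\|_{1} = \min_{k\in\mathbb{Z}}\|\mathbf{u}-k\mathbf{e}_n-\mathbf{t}\|_{1} = \min_{k'\in\mathbb{Z}}\|\mathbf{u}+k'\mathbf{e}_n-\mathbf{t}\|_{1} = d_f(\mathbf{u},\mathbf{t}).
\end{align*}

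For the triangle inequality, let $\mathbf{t},\mathbf{u},\mathbf{v}\in T_n$ and choose minimizers $k_1,k_2\in\mathbb{Z}$ with $d_f(\mathbf{t},\mathbf{v})=\|\mathbf{t}+k_1\mathbf{e}_n-\mathbf{v}\|_{1}$ and $d_f(\mathbf{v},\mathbf{u})=\|\mathbf{v}+k_2\mathbf{e}_n-\mathbf{u}\|_{1}$. Using $k=k_1+k_2$ as a (not necessarily optimal) candidate shift and applying the triangle inequality for $\|\cdot\|_{1}$ after inserting $\pm\mathbf{v}$, one obtains
\begin{align*}
d_f(\mathbf{t},\mathbf{u}) \le \|\mathbf{t}+(k_1+k_2)\mathbf{e}_n-\mathbf{u}\|_{1} \le \|\mathbf{t}+k_1\mathbf{e}_n-\mathbf{v}\|_{1}+\|\mathbf{v}+k_2\mathbf{e}_n-\mathbf{u}\|_{1} = d_f(\mathbf{t},\mathbf{v})+d_f(\mathbf{v},\mathbf{u}).
\end{align*}
This finishes the proof. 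The only genuinely content-bearing step is the normalization argument used to conclude $\mathbf{t}=\mathbf{u}$ from $d_f(\mathbf{t},\mathbf{u})=0$; everything else reduces to standard properties of $\|\cdot\|_{1}$ combined with the group structure of the shifts $k\mathbf{e}_n$.
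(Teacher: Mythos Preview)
Your proof is correct and follows essentially the same route as the paper's: the symmetry and triangle-inequality arguments are identical in substance, and your normalization observation (every element of $T_n$ has minimum coordinate $1$, forcing $k=0$) is exactly the content of the paper's identity-of-indiscernibles step, just phrased more crisply. Your preliminary remark that the minimum over $k\in\mathbb{Z}$ is actually attained is a welcome addition the paper leaves implicit.
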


\begin{proof}
We have to verify the axioms of a metric. First, consider
\[
d_f({\bf t},{\bf t})=\min\limits_{k\in \mathbb{Z}}(\Vert {\bf t}+k\cdot {\bf e}_n- {\bf t}\Vert_{1})
=\min\limits_{k\in \mathbb{Z}}(\Vert k\cdot {\bf e}_n\Vert_{1})=0
\]
Conversely, if $d_f({\bf t},{\bf u})=0$, there exists a $k\in \mathbb{Z}$ s.t. $\min\limits_{k\in \mathbb{Z}}(\Vert {\bf t}+k\cdot {\bf e}_n- {\bf u}\Vert_{1})=0$. Either $k=0$, then ${\bf t}={\bf u}$ and we are done or $k\neq 0$. In this second case, we obtain the following: ${\bf t}\in T_n$, hence it takes values in $\{1,...,m\}$ and, therefore, ${\bf u}$ takes values in $\{k,...,m+k\} $. This leads to the contradiction ${\bf u}\notin T_n$.

The symmetry property is obtained as follows $({\bf t},{\bf u}\in T_n)$:
\[
d_f({\bf t},{\bf u})
=\min\limits_{k\in \mathbb{Z}}(\Vert {\bf u}-k\cdot {\bf e}_n- {\bf t}\Vert_{1})
=\min\limits_{\ell\in \mathbb{Z}}(\Vert {\bf u}+\ell\cdot {\bf e}_n- {\bf t}\Vert_{1})
=d_f({\bf u},{\bf t})
\]
where we have multiplied the entries of the vector by -1. 
Finally, the triangle inequality follows by $({\bf t},{\bf u},{\bf v}\in T_n)$:
\begin{align*}
d_f({\bf t},{\bf u})+d_f({\bf u},{\bf v})&=\min\limits_{k\in \mathbb{Z}}(\Vert {\bf t}+k\cdot {\bf e}_n- {\bf u}\Vert_{1})+\min\limits_{\ell\in \mathbb{Z}}(\Vert {\bf u}+\ell\cdot {\bf e}_n- {\bf v}\Vert_{1}) \\
&\geq \min\limits_{k\in \mathbb{Z}}\min\limits_{\ell\in \mathbb{Z}} (\Vert {\bf t}+k\cdot {\bf e}_n- {\bf u}+ {\bf u}+\ell\cdot {\bf e}_n- {\bf v}\Vert_{1}) \\
&= \min\limits_{k\in \mathbb{Z}}\min\limits_{\ell\in \mathbb{Z}} (\Vert {\bf t}+(k+\ell)\cdot {\bf e}_n- {\bf v}\Vert_{1})\\
&=\min\limits_{m\in \mathbb{Z}}(\Vert {\bf t}+m\cdot {\bf e}_n- {\bf v}\Vert_{1})
\end{align*}
\end{proof}

In order to implement the new metric, the range of $k$ can be limited to the length of the ordinal pattern such that
\begin{align*}
	d_f({\bf t},{\bf u})=\min\limits_{-n\leq k \leq n}(\Vert {\bf t}+k\cdot {\bf e}_n- {\bf u}\Vert_{1}).
\end{align*}
This simplification still covers all necessary comparisons of patterns.

This metric measures how close patterns are. In case of coincident patterns, we would like to assign the `score' one. The further the patterns are away from each other, the lower the score should be. Following \cite{schnurrdehling2016}, instead of using $d$ directly, we use the following anti-monotone weight function on the values of $d$ for patterns of length less than six 

\begin{align}\label{weightfkt}
	w(x):=1\cdot 1_{\{x=0\} } + 0.5 \cdot 1_{\{x=1\} },
\end{align}

and for patterns of length greater than or equal to six

\begin{align}\label{weightfkt2}
	w(x):=1\cdot 1_{\{x=0\} } + 0.75 \cdot 1_{\{x=1\} } + 0.5 \cdot 1_{\{x=2\} } + 0.25 \cdot 1_{\{x=3\} },
\end{align}

where $1_{\{x=t\}}$ is the indicator function, being one, when $x=t$, and zero elsewhere. Too small steps in the weight function in the presence of small patterns could lead to a too high acceptance of deviations from the observed patterns and hence could increase ordinal pattern dependence artificially. For larger $n$, an appropriate weight function might be defined, for example by using smaller steps of weights. Since we limit the applications to small values of $n$, these weight functions are sufficient for our purposes. Let us remark that the weight functions are different than in the classical case: Since the $L_1$-distance between two permutations in $S_n$ is always an even number, the weight function is - in that case - defined on even numbers only. 

We define the score function for ${\bf t},{\bf u}\in T_n$ as
\begin{align}\label{score}
	s({\bf t},{\bf u}):=w\circ d_f({\bf t},{\bf u}) = w(d_f({\bf t},{\bf u})).
\end{align}

To obtain the score for the dependence between the two time series,  the so-called \emph{total score} is used, defined as
\begin{align*}
	S=\sum_{{\bf t}\in T_n}\sum_{{\bf u}\in T_n}s({\bf t},{\bf u}) \cdot \bbp(\Psi(X_1,...,X_n)={\bf t}, \Psi(Y_1,...,Y_n)={\bf u}).
\end{align*}
The higher the total score is, the stronger the co-movement between the time series (resp. data sets).

Of course, this score can also be applied to the non-weighted case, where simply a sum of 1s is considered. In principle it is possible to calculate a comparison value in the same spirit as above.

%XXX: Brauchen wir hier auch ein Comparison Value oder ist das doof? 

%Vor- und Nachteile

An approach via generalized ordinal patterns has (almost) the same advantages as the classical ordinal pattern approach. In addition, we are now able to deal with ties and hence we can handle categorial data. Let us emphasize that the whole analysis is stable under strictly monotone transformations of the state space. Structural breaks in a single time series do not affect the ordinal pattern dependence very much. In hydrology, e.g. a shift in mean can be caused by a new dam. The algorithms in order to analyze data sets by our method are very quick and in the analysis of  `co-monotonic behavior' no second moments are needed. These are inevitable when applying classical correlation. 

Small noise might (in contrast to the classical approach) change the ordinal structures, since ties might become increasing or decreasing patterns. However, keep in mind that the new approach is designed in first place to deal with categorial data, where a small noise does not make any sense.

\subsection{Natural Estimators for the Ordinal Pattern Coefficient and the Total Score}

The above given theoretical characteristics for ordinal patterns can be estimated in the most natural way by their sample analogues. Given the observations of $\left(X_1,\ldots,X_N\right)$ and $\left(Y_1,\ldots,Y_N\right)$, the following estimators can be derived.

For the probability of a coincident ordinal pattern $p$, the sample estimator is given by
\[
\hat{p}_N=\frac{1}{N-n+1} \sum_{j=1}^{N-n+1} 1_{\{ {\Psi(X_j,...,X_{j+n-1})=\Psi(Y_j,...,Y_{j+n-1}) }\}}.
\]

A similar estimator was used in \cite{schnurrdehling2016} in the context of classical ordinal patterns.

In order to estimate the standardized ordinal pattern coefficient, additional parameters are required. Sample estimators for these parameters $q$, $r$ and $s$ are given in the following

\begin{align*}
	\hat{q}_N &=\sum_{{\bf t}\in T_n}\left(\frac{1}{N-n+1}\right)^2\sum_{i=1}^{N-n+1}1_{\{\Psi(X_i,\ldots,X_{i+n-1})={\bf t}\}}\sum_{i=1}^{N-n+1}1_{\{\Psi(Y_i,\ldots,Y_{i+n-1})={\bf t}\}}\\
	\hat{r}_N&=\frac{1}{N-n+1}\sum_{i=1}^{N-n+1}1_{\{\Psi(X_i,\ldots,X_{i+n-1})=\Psi(-Y_{i},\ldots,-Y_{i+n-1})\}}\\
	\hat{s}_N&=\sum_{{\bf t}\in T_n}\left(\frac{1}{N-n+1}\right)^2\sum_{i=1}^{N-n+1}1_{\{\Psi(X_i,\ldots,X_{i+n-1})={\bf t}\}}\sum_{i=1}^{N-n+1}1_{\{\Psi(-Y_i,\ldots,-Y_{i+n-1})={\bf t}\}}.
\end{align*}

With these, the standardized ordinal pattern coefficient can be estimated by

\begin{align*}
	\widehat{ord}(X,Y)=\left(\frac{\hat{p}_N-\hat{q}_N}{1-\hat{q}_N}\right)^+-\left(\frac{\hat{r}_N-\hat{s}_N}{1-\hat{s}_N}\right)^+.
\end{align*}

When considering the total score $S$, a sample estimator is given by

\begin{align*}
	\hat{S}_N=\frac{1}{N-n+1}\sum_{i=1}^{N-n+1}\omega(d_f(\Psi(X_i,\ldots,X_{i+n-1}), \Psi(Y_i,\ldots,Y_{i+n-1}))).
\end{align*}

%This is the score for the comparison at one single point. To compare several time series in time or for a spatial comparison, not one comparison is executed, but one for each time step. 
Here, the considered window for the pattern is shifted in time by one step each time. Alternatively, the window could also be shifted by the length of the ordinal pattern $n$. This way, an overlapping of the windows could be omitted. However, it is not always a priori known, which time steps should be part of one window. If, e.g., monthly data is used, a natural window would be the choice of one year, capturing the seasonality. However, for annual data this is no longer obvious. Hence, a loss of information could be obtained when choosing a step of length $n$.

\section{Limit Theorems}\label{Sec:LT} % and Tests for Structural Breaks

In this section, we provide limit theorems under short range dependence for the previously introduced estimators in the context of ordinal pattern dependence. 
Moreover, a limit theorem for the multivariate case, i.e., when considering a pattern vector that is shifted in time, is given.
These limit theorems prepare the applications in our temporal respectively spatial approach in the subsequent section. 

\subsection{Pairwise comparison in an SRD regime}

The first case that is handled is the ordinal pattern dependence. This means limit theorems are required for a bivariate vector of random variables.
We cannot use the limit theorems as presented in \cite{schnurrdehling2016} for the following reason: 
Already the fundamental first theorem in that paper needs the assumption that the distribution functions of the underlying random variables are Lipschitz continuous. Our random variables might be discrete (cf. the application on flood alert classes in Section \ref{sec:application}) and hence a more generalized setting is required.
%Our random variables have (w.l.o.g.) values in $\{1,2,...,n\}$ (or maybe $\bbn$). Hence we need limit theorems for time series (SRD, LRD) attaining values in a discrete (finite?) space...

%XXX: Wir sollten schauen, ob wir hochfrequenzdaten mitabarbeiten koennen. Viele Ties, aber nicht a priori in $\bbn$. 

% $n$ pattern laenge, $N$ Anzahl Beobachtungen (nach Schnurr+Dehling)

%We might prove limit theorems in one time series (plus change of probabilities of patterns)
%and for our new ordinal pattern dependence (plus change in the dependence). 

Limit theorems are developed under a general setting of short-range dependence to consider auto-correlation in the data. Here, we consider absolute regularity. 

\begin{definition}
	Let $\mathcal{A}, \mathcal{B}\subset\mathcal{F}$ be two $\sigma$-fields on the probability space $(\Omega,\mathcal{F},\mathbb{P})$. The absolute regularity coefficient of $\mathcal{A}$ and $\mathcal{B}$ is given by
	\begin{align*}
		\beta(\mathcal{A},\mathcal{B})=\mathbb{E}\sup\limits_{A\in\mathcal{A}}\big| \mathbb{P}(A\vert \mathcal{B})-\mathbb{P}(A)\big|.
	\end{align*}
	Independence of $\mathcal{A}$ and $\mathcal{B}$ is equivalent to $\beta(\mathcal{A},\mathcal{B})=0$.
	
	If $(X_k)_{k\in\mathbb{N}}$ is a stationary process, then the absolute regularity coefficients of $(X_k)_{k\in\mathbb{N}}$ are given by 
	\begin{align*}
		\beta(\ell)=\sup\limits_{k\in\mathbb{N}}\beta(\mathcal{F}_1^k,\mathcal{F}_{k+\ell}^\infty).
	\end{align*}
	$(X_k)_{k\in\mathbb{N}}$ is called absolutely regular, if $\beta(\ell)\rightarrow0$ as $l\rightarrow \infty$.
\end{definition}

This concept of short range dependence has been introduced by \cite{Volkonskii.1959} (and there attributed to Kolmogorov) and includes many well-known models used in applications like certain Markov chains or AR-processes. Gaussian sequences are also absolute regular, as long as their spectral density is of a certain form \cite{Bradley.2005}. Moreover, renewal processes and solutions of stochastic differential equations were shown to be absolutely regular random processes.

Under this dependency assumption, the following limit theorem holds.

\begin{theorem}
	Let $(Z_k=(X_k,Y_k))_{k\in\mathbb{N}}$ be an absolutely regular, stationary process on the probability space $(\Omega,\mathcal{F},\mathbb{P})$.
	Then,
	\[
	\hat{p}_N \rightarrow p \hspace{5mm} \mathbb{P}\text{-a.s.}
	\]
\end{theorem}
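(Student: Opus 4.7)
The plan is to recognize $\hat p_N$ as the Cesàro average of a stationary, bounded, ergodic sequence and then invoke Birkhoff's ergodic theorem. Concretely, define
\[
U_j := 1_{\{\Psi(X_j,\ldots,X_{j+n-1}) = \Psi(Y_j,\ldots,Y_{j+n-1})\}},
\qquad j \in \mathbb{N}.
\]
Since the generalized pattern map $\Psi$ is a measurable function on $\mathbb{R}^n$, we can write $U_j = g(Z_j,\ldots,Z_{j+n-1})$ for a fixed measurable $g:\mathbb{R}^{2n}\to\{0,1\}$. Therefore $(U_j)_{j\in\mathbb{N}}$ is a measurable functional of the stationary process $(Z_k)$, so $(U_j)$ is itself stationary, and clearly $|U_j|\leq 1$, in particular $U_1\in L^1$.

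Next I would argue ergodicity. Absolute regularity, i.e. $\beta(\ell)\to 0$, implies $\alpha$-mixing (since $2\alpha(\ell)\leq\beta(\ell)$), and a fortiori mixing in the ergodic-theoretic sense with respect to the shift on the canonical path space of $(Z_k)$. In particular the shift is ergodic. Because $(U_j)$ is obtained from $(Z_k)$ by a stationary factor map (a sliding block code of window length $n$), the shift on the process $(U_j)$ is also ergodic.

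Finally, by Birkhoff's ergodic theorem applied to the stationary ergodic sequence $(U_j)$,
\[
\frac{1}{N-n+1}\sum_{j=1}^{N-n+1} U_j \;\longrightarrow\; \mathbb{E}[U_1] \qquad \mathbb{P}\text{-a.s.}
\]
By stationarity of $(Z_k)$, $\mathbb{E}[U_1] = \mathbb{P}\bigl(\Psi(X_1,\ldots,X_n)=\Psi(Y_1,\ldots,Y_n)\bigr) = p$, which yields $\hat p_N \to p$ almost surely.

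The only non-routine point is the passage from absolute regularity of $(Z_k)$ to ergodicity of the shift underlying the statement of the ergodic theorem, but this is a standard consequence of $\beta$-mixing and needs no new estimates; none of the argument uses Lipschitz continuity of marginals, which is the feature that allows the result to remain valid for discrete state spaces such as the flood alert classes in the application. If preferred, one can bypass ergodic theorems entirely and proceed via a second-moment argument: the covariance bound $|\mathrm{Cov}(U_j,U_{j+\ell})|\leq \beta(\ell-n+1)$ (valid for $\ell\geq n$) together with $\beta(\ell)\to 0$ gives $\mathrm{Var}(\hat p_N)\to 0$, and an application of the Borel–Cantelli lemma along a subsequence $N_k = k^2$ combined with monotone control of the partial sums delivers almost sure convergence. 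Either route works; I would favour the ergodic theorem as it keeps the hypotheses minimal.
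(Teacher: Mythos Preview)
Your proposal is correct and follows essentially the same approach as the paper: both recognize that absolute regularity implies ergodicity of the shift, then apply Birkhoff's ergodic theorem to the bounded stationary sequence of pattern-coincidence indicators. Your write-up is more detailed (explicitly checking measurability, the $\beta\Rightarrow\alpha\Rightarrow$ ergodic chain, and the sliding-block factor argument) and offers a second-moment alternative that the paper omits, but the underlying idea is the same.
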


The theorem is in line with \cite{LRD20} and \cite{kel-sin10}, where similar results for the long-range dependent and weak dependence  cases are stated.

\begin{proof}
	Since we assume that the underlying short-range dependent process is absolute regular ($\beta$-mixing), it is also ergodic. This makes the ergodic theorem by Birkhoff applicable. We apply it to the partial sums 
	
	$$\sum_{i=1}^{N-n+1}\left(f(X_i,\ldots,X_{i+n-1},Y_i,\ldots,Y_{i+n-1})-\mathbb{E}f(X_1,\ldots,X_n, Y_1,\ldots,Y_n)\right)$$
	
	with $f(X_i,\ldots,X_{i+n-1},Y_i,\ldots,Y_{i+n-1})=1_{\{\Psi(X_i,\ldots,X_{i+n-1})=\Psi(Y_i,\ldots,Y_{i+n-1})\}}$ to obtain
	
	$$\frac{1}{N-n+1}\sum_{i=1}^{N-n+1}f(X_{i},\ldots,X_{i+n-1},Y_i,\ldots,Y_{i+n-1})\rightarrow \mathbb{E}f(X_1,\ldots,X_n,Y_1,\ldots,Y_n)=p $$
	\hspace{5mm} $\mathbb{P}\text{-almost surely}$.
	
\end{proof}

Besides consistency of the estimator, also a limit distribution is helpful, e.g. to derive confidence bands. The respective theorem is given in the following, together with limit theorems for the remaining parameters $q$, $r$ and $s$, such that also a limit distribution of the standardized ordinal pattern coefficient can be obtained.

\begin{theorem} \label{thm:limitp}
	Let $(Z_k=(X_k,Y_k))_{k\in\mathbb{N}}$ be an absolutely regular, stationary process on the probability space $(\Omega,\mathcal{F},\mathbb{P})$ with coefficients $\sum_{\ell=1}^\infty(\beta(\ell))^{1/2}<\infty$.
	Then,
	\begin{align*}
		\sqrt{N}(\hat{p}_N-p)\rightarrow \mathcal{N}(0,\sigma^2),
	\end{align*}
	where 
	\begin{align*}
		\sigma^2=&\Var(1_{\{\Psi(X_1,\ldots,X_{n})=\Psi(Y_1,\ldots,Y_{n})\}})\\
		& +2\sum_{m=2}^\infty \Cov(1_{\{\Psi(X_1,\ldots,X_{n})=\Psi(Y_1,\ldots,Y_{n})\}},1_{\{\Psi(X_m,\ldots,X_{m+n-1})=\Psi(Y_m,\ldots,Y_{m+n-1})\}}).
	\end{align*}
\end{theorem}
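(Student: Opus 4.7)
The plan is to rewrite $\sqrt{N}(\hat{p}_N - p)$ as a normalized partial sum of a stationary, bounded, centered sequence and then invoke a classical central limit theorem for absolutely regular sequences. Concretely, I would set
$$\xi_j := 1_{\{\Psi(X_j,\ldots,X_{j+n-1}) = \Psi(Y_j,\ldots,Y_{j+n-1})\}} - p,$$
which is bounded, centered, and stationary in $j$ by stationarity of $(Z_k)$. Then
$$\sqrt{N}\,(\hat{p}_N - p) = \sqrt{\tfrac{N}{N-n+1}} \cdot \frac{1}{\sqrt{N-n+1}} \sum_{j=1}^{N-n+1} \xi_j,$$
and since the prefactor tends to $1$, Slutsky's theorem reduces the problem to a CLT for the second factor.

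Next, I would transfer the absolute regularity from $(Z_k)$ to $(\xi_j)$. Because each $\xi_j$ is $\sigma(Z_j,\ldots,Z_{j+n-1})$-measurable, one has $\sigma(\xi_1,\ldots,\xi_k) \subset \mathcal{F}_1^{k+n-1}(Z)$ and $\sigma(\xi_{k+\ell},\xi_{k+\ell+1},\ldots) \subset \mathcal{F}_{k+\ell}^\infty(Z)$, so the mixing coefficients $\tilde\beta(\ell)$ of $(\xi_j)$ satisfy $\tilde\beta(\ell) \leq \beta(\ell - n + 1)$ for $\ell \geq n$. Hence the summability hypothesis $\sum_\ell \beta(\ell)^{1/2} < \infty$ passes directly to $(\xi_j)$. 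I would then apply the classical central limit theorem for stationary $\beta$-mixing sequences; since the $\xi_j$ are bounded by $1$, $\sum \beta(\ell) < \infty$ already suffices, so the square-root-summability assumed in the theorem is considerably more than needed. This yields
$$\frac{1}{\sqrt{N-n+1}} \sum_{j=1}^{N-n+1} \xi_j \longrightarrow \mathcal{N}(0,\sigma^2)$$
in distribution, with
\begin{align*}
\sigma^2 = \Var(\xi_1) + 2 \sum_{m=2}^\infty \Cov(\xi_1, \xi_m),
\end{align*}
which coincides with the expression in the theorem statement because variance and covariance are invariant under the subtraction of the constant $p$. Absolute convergence of the covariance series is guaranteed by the standard bound $|\Cov(\xi_1,\xi_m)| \leq 4\,\tilde\beta(m-1)$ for random variables bounded by $1$.

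The mathematical content is essentially classical, so I do not anticipate a deep conceptual obstacle. The main care required concerns the sliding-window bookkeeping: one has to verify the mixing-coefficient inheritance carefully (the $\xi_j$ depend on \emph{overlapping} blocks of the $Z_k$), handle the minor discrepancy between $N$ and $N-n+1$ in the normalization, and confirm that the covariance series defining $\sigma^2$ converges absolutely. Because the summands are indicator variables, no moment condition beyond the given mixing rate enters the argument.
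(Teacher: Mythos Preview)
Your proposal is correct and follows essentially the same route as the paper: define the centered indicator sequence $\xi_j$ (the paper calls it $f_i$), observe it is stationary and bounded, and apply a CLT for functionals of absolutely regular processes to obtain the stated limit and variance. The only cosmetic difference is that the paper invokes Ibragimov's Theorem~2.1 for functionals of a mixing process directly, whereas you first transfer the $\beta$-coefficients to $(\xi_j)$ via the sliding-window inclusion and then apply a CLT for $\beta$-mixing sequences; your additional care with the $N$ versus $N-n+1$ normalization and the covariance-series convergence is welcome bookkeeping that the paper leaves implicit.
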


The concept of the bivariate mixing process used here is very similar to the theorem given in \cite{Doukhan.2015}. There, a limit theorem is proven for a strongly mixing process by using the respective inequalities of \cite{Rio.2000}. We will obtain a similar result by applying a theorem of \cite{Ibragimov.1962}, that explicitly omits any continuity assumption on the distribution.
\begin{proof}
	First, define $(f_i)_{i \in \mathbb{N}}$ with $f_i=1_{\{\Psi(X_i,\ldots,X_{i+n-1})=\Psi(Y_i,\ldots,Y_{i+n-1})\}}-p$ as stationary process derived from the random variable $f(Z_1,\ldots, Z_{i-1},Z_i,Z_{i+1},\ldots)$. Though the stationary process may not automatically be absolutely regular, Theorem 2.1 of \cite{Ibragimov.1962} gives a theorem with sufficient conditions to nevertheless derive a limit theorem for $(f_i)_{i \in \mathbb{N}}$.
	It is $\mathbb{E}(f_1)=\mathbb{P}(\Psi(X_1,\ldots,X_{n})=\Psi(Y_1,\ldots,Y_{n}))-p=0$ and $\mathbb{E}f_1^2<\infty$ due to the boundedness of $f_1$ as and indicator function and hence assumption one of Theorem 2.1 of \cite{Ibragimov.1962} follows from stationarity. The remaining assumptions of Theorem 2.1 follow from the assumptions made in our theorem.
	Therefore, we can apply Theorem 2.1 of \cite{Ibragimov.1962} to $(f_i)_{i\in\mathbb{N}}$. This gives us	
	
	\begin{align*}
		\mathbb{P}\left(\frac{f_1+\ldots+f_N}{\sigma\sqrt{N}}<x\right)\longrightarrow\frac{1}{\sqrt{2\pi}}\int_\infty^x e^{-u^2/2}du
	\end{align*}
	for $N\rightarrow\infty$ and thus 
	\begin{align*}
		\frac{1}{\sqrt{N}}\sum_{i=1}^N\Big(\mathbb{E}(\Psi(X_i,\ldots,X_{i+n-1})=\Psi(Y_i,\ldots,Y_{i+n-1}))-p\Big)=\sqrt{N}(\hat{p}_N-p)\rightarrow \mathcal{N}(0,\sigma^2),
	\end{align*}
	for $\sigma\neq 0$.
	
	The variance $\sigma^2$ is given by
	\begin{align*}
		\sigma^2=&\mathbb{E}f_1^2+2\sum_{j=1}^\infty \mathbb{E}(f_1f_i)\\
		=&\Var(1_{\{\Psi(X_1,\ldots,X_{n})=\Psi(Y_1,\ldots,Y_{n})\}})\\
		& +2\sum_{m=2}^\infty \Cov(1_{\{\Psi(X_1,\ldots,X_{n})=\Psi(Y_1,\ldots,Y_{n})\}},1_{\{\Psi(X_m,\ldots,X_{m+n-1})=\Psi(Y_m,\ldots,Y_{m+n-1})\}})<\infty,
	\end{align*}
	which completes the proof.
\end{proof}

\begin{remark}
	For applications, it might be helpful to estimate $\sigma$ in the previous theorem since in general this parameter is unknown. Such an estimator $\hat{\sigma}_N$, that is consistent, can be derived by applying the sample estimators for $U$-statistic variances in \cite{deJong.2000}. Then,
	\begin{align*}
		\hat{\sigma}^2_N=\frac{1}{N}\sum_{i=1}^{N-n+1}\sum_{j=1}^{N-n+1}k\left(\frac{i-j}{b_N}\right)&\left(1_{\{\Psi(X_i,\ldots,X_{i+n-1})=\Psi(Y_i,\ldots,Y_{i+n-1})\}}-\hat{p}_N\right)\\
		&\left(1_{\{\Psi(X_j,\ldots,X_{j+n-1})=\Psi(Y_j,\ldots,Y_{j+n-1})\}}-\hat{p}_N\right).
	\end{align*}

	Application of Slutsky's Theorem gives
	
	\begin{align*}
		\sqrt{N}\frac{\hat{p}_N-p}{\hat{\sigma}_N}\rightarrow \mathcal{N}(0,1).
	\end{align*}
	
	under the conditions of Theorem \ref{thm:limitp}.
\end{remark}

Additionally, limit theorems for the remaining parameters which have been introduced above can be obtained. 

\begin{theorem}
	Under the same conditions as in Theorem \ref{thm:limitp},
	
	\begin{align*}
		\sqrt{N}(\hat{q}_N-q)\rightarrow \mathcal{N}(0,\delta^2),
	\end{align*}
	where 
	\begin{align*}
		\delta^2=\sum_{{\bf t}_1,{\bf t}_2\in T_{n}}&\mathbb{P}(\Psi(Y_1,\ldots,Y_{n})={\bf t}_1)\sum_{k=-\infty}^\infty \Cov(1_{\{\Psi(X_1,\ldots,X_{n})={\bf t}_1\}},1_{\{\Psi(X_k,\ldots,X_{k+n-1}={\bf t}_2)\}})\\ &\mathbb{P}(\Psi(Y_1,\ldots,Y_{n})={\bf t}_2)\\
		+ 2\sum_{{\bf t}_1,{\bf t}_2\in T_{n}}&\mathbb{P}(\Psi(X_1,\ldots,X_{n})={\bf t}_1)\sum_{k=-\infty}^\infty \Cov(1_{\{\Psi(X_1,\ldots,X_{n})={\bf t}_1\}},1_{[\Psi(Y_k,\ldots,Y_{k+n-1}={\bf t}_2)\}})\\
		&\mathbb{P}(\Psi(Y_1,\ldots,Y_{n})={\bf t}_2)\\
		+ \sum_{{\bf t}_1,{\bf t}_2\in T_{n}}&\mathbb{P}(\Psi(X_1,\ldots,X_{n})={\bf t}_1)\sum_{k=-\infty}^\infty \Cov(1_{[\Psi(Y_1,\ldots,Y_{n})={\bf t}_1\}},1_{[\Psi(Y_k,\ldots,Y_{k+n-1}={\bf t}_2)\}})\\
		&\mathbb{P}(\Psi(X_1,\ldots,x_{n})={\bf t}_2)
	\end{align*}
\end{theorem}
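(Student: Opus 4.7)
The plan is to linearize $\hat{q}_N - q$ via a product expansion, reducing the statement to a CLT for a bounded functional of $(Z_k)$ to which Theorem 2.1 of \cite{Ibragimov.1962} applies exactly as in the proof of Theorem \ref{thm:limitp}. For each ${\bf t} \in T_n$ introduce the empirical pattern frequencies $\hat{p}_X({\bf t}) := (N-n+1)^{-1}\sum_{i=1}^{N-n+1} 1_{\{\Psi(X_i,\ldots,X_{i+n-1})={\bf t}\}}$ and $p_X({\bf t}) := \mathbb{P}(\Psi(X_1,\ldots,X_n) = {\bf t})$, and define $\hat{p}_Y({\bf t}), p_Y({\bf t})$ analogously. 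Then $\hat{q}_N = \sum_{\bf t}\hat{p}_X({\bf t})\hat{p}_Y({\bf t})$ and $q = \sum_{\bf t}p_X({\bf t})p_Y({\bf t})$; using the identity $ab - a'b' = (a-a')b' + a'(b-b') + (a-a')(b-b')$ pointwise in ${\bf t}$ and summing gives
\begin{align*}
    \hat{q}_N - q = \frac{1}{N-n+1}\sum_{i=1}^{N-n+1}g_i + R_N,
\end{align*}
with
\begin{align*}
    g_i := \sum_{{\bf t}\in T_n}\bigl[p_Y({\bf t})\bigl(1_{\{\Psi(X_i,\ldots,X_{i+n-1})={\bf t}\}} - p_X({\bf t})\bigr) + p_X({\bf t})\bigl(1_{\{\Psi(Y_i,\ldots,Y_{i+n-1})={\bf t}\}} - p_Y({\bf t})\bigr)\bigr]
\end{align*}
and $R_N := \sum_{\bf t}(\hat{p}_X({\bf t}) - p_X({\bf t}))(\hat{p}_Y({\bf t}) - p_Y({\bf t}))$.

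The sequence $(g_i)$ is stationary, centered, bounded, and measurable with respect to a finite window of $(Z_k)$, hence absolutely regular with coefficients $\beta_g(\ell) \le \beta(\ell - n + 1)$ that inherit the hypothesis $\sum_\ell \beta(\ell)^{1/2} < \infty$. Applying Theorem 2.1 of \cite{Ibragimov.1962} to $(g_i)$, verbatim as in the proof of Theorem \ref{thm:limitp}, yields $N^{-1/2}\sum_i g_i \to \mathcal{N}(0,\delta^2)$ with $\delta^2 = \Var(g_1) + 2\sum_{m\ge 2}\Cov(g_1, g_m)$. To transfer this to $\sqrt{N}(\hat{q}_N-q)$ I need $\sqrt{N}R_N = o_P(1)$, which follows by applying the marginal CLT pattern by pattern: $\hat{p}_X({\bf t}) - p_X({\bf t}) = O_P(N^{-1/2})$ and likewise for $Y$, so each summand of $R_N$ is $O_P(N^{-1})$ and, since $|T_n|<\infty$, the full sum $R_N$ is $O_P(N^{-1})$.

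It then remains to rewrite $\delta^2$ in the pattern-indexed form of the statement. Expanding the square $g_1^2$ produces three families of double sums over $({\bf t}_1,{\bf t}_2)\in T_n^2$ with weights $p_Y({\bf t}_1)p_Y({\bf t}_2)$, $p_X({\bf t}_1)p_Y({\bf t}_2)$ and $p_X({\bf t}_1)p_X({\bf t}_2)$, corresponding to the $XX$, cross $XY$ and $YY$ contributions; the cross term picks up a factor of two because the two symmetric $XY$ and $YX$ combinations both appear. Expanding $g_1 g_m$ generates the same three double sums but with covariances at lag $m-1$ instead of at lag $0$, and combining positive and negative lags via stationarity rewrites $\Var(g_1) + 2\sum_{m\ge 2}\Cov(g_1,g_m)$ as a two-sided sum $\sum_{k=-\infty}^\infty\Cov(\cdot,\cdot_k)$, matching the three summands in the theorem.

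The main obstacle is purely organizational: correctly regrouping the nine cross products generated by the product expansion of $g_1 g_m$ into the three pattern-indexed double sums, and verifying absolute convergence of each covariance series. The latter follows from the Ibragimov covariance inequality for absolutely regular sequences together with $\sum_\ell\beta(\ell)^{1/2}<\infty$, so no new analytic input beyond what is already invoked in the proof of Theorem \ref{thm:limitp} is needed.
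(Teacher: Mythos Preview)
Your proof is correct and reaches the same conclusion as the paper, but by a slightly more elementary route. The paper first establishes a joint multivariate CLT for the full vector of empirical pattern frequencies $(\hat q_X(\mathbf{t}),\hat q_Y(\mathbf{t}))_{\mathbf{t}\in T_n}$ using Ibragimov's theorem together with the Cram\'er--Wold device, and then applies the delta method to the bilinear map $f(\mathbf{x},\mathbf{y})=\sum_i x_i y_i$ to read off the limit of $\hat q_N$. You instead carry out the linearization by hand: the identity $ab-a'b'=(a-a')b'+a'(b-b')+(a-a')(b-b')$ is exactly the first-order Taylor expansion of $f$ with explicit remainder, and you apply the univariate CLT directly to the resulting linear combination $g_i$. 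This bypasses Cram\'er--Wold and the delta method as named black boxes at the cost of bounding $R_N$ explicitly. The paper's packaging is more modular and would transfer immediately to other smooth functionals of the pattern-frequency vector, while yours makes the structure of $\delta^2$ emerge more transparently from the covariance expansion of $(g_i)$; analytically the two arguments are equivalent.
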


%Again, though the result may be similar to the one obtained in \cite{schnurrdehling2016}, we have to modify the proofs since the continuity assumption can no longer hold under the generalized setting.
\begin{proof}
	We will first treat the two random vectors $(X_k)_{k\in \mathbb{N}}$ and $(Y_k)_{k\in\mathbb{N}}$ separately. Define
	
	\begin{align*}
		q_X(\textbf{t})&=\mathbb{P}(\Psi(X_1,\ldots,X_{n})=\textbf{t})\\
		q_Y(\textbf{t})&=\mathbb{P}(\Psi(Y_1,\ldots,Y_{n})=\textbf{t})
	\end{align*}
	with $\textbf{t}\in T_n$ and the respective sample analogons
	\begin{align*}
		\hat{q}_X(\textbf{t})&=\frac{1}{N}\sum_{i=1}^{N-n+1}1_{\{\Psi(X_i,\ldots,X_{i+n-1})=\textbf{t}\}}\\
		q_Y(\textbf{t})&=\frac{1}{N}\sum_{i=1}^{N-n+1}1_{\{\Psi(Y_i,\ldots,Y_{i+n-1})=\textbf{t}\}}.
	\end{align*}
	Applying again Theorem 2.1 of \cite{Ibragimov.1962}, where the conditions are fulfilled due to the boundedness of the indicator function and the assumptions on the absolute regular coefficients, together with the Cram\'er-Wold device it follows in the multivariate case that 
	
	\begin{align*}
		&\sqrt{N}((\hat{q}_X(\textbf{t})-q_X(\textbf{t}))^T_{\textbf{t}\in T_n}, (\hat{q}_Y(\textbf{t})-q_Y(\textbf{t}))^T_{\textbf{t}\in T_n})^T\rightarrow \mathcal{N}(0,\Sigma)\\
	\end{align*}
	
	with
	
	\begin{align*}
		\Sigma=&\left(\begin{array}{rr} 
			\Sigma_{X} & \Sigma_{XY} \\ 
			\Sigma_{YX} & \Sigma_{Y}  
		\end{array}\right)
	\end{align*}
	
	and 
	\begin{align*}
		\Sigma_{X}&=\left(\sum_{m=1}^\infty\Cov(1_{\{\Psi(X_1,\ldots,X_{n})=\textbf{t}_1\}},1_{\{\Psi(X_m,\ldots,X_{m+n-1})=\textbf{t}_2\}})\right)_{\textbf{t}_1,\textbf{t}_2\in T_n}\\
		\Sigma_{XY}&=\Sigma_{YX}=\left(\sum_{m=1}^\infty\Cov(1_{\{\Psi(X_1,\ldots,X_{n})=\textbf{t}_1\}},1_{\{\Psi(Y_m,\ldots,Y_{m+n-1})=\textbf{t}_2\}})\right)_{\textbf{t}_1,\textbf{t}_2\in T_n}\\
		\Sigma_{Y}&=\left(\sum_{m=1}^\infty\Cov(1_{\{\Psi(Y_1,\ldots,Y_{n})=\textbf{t}_1\}},1_{\{\Psi(Y_m,\ldots,Y_{m+n-1})=\textbf{t}_2\}})\right)_{\textbf{t}_1,\textbf{t}_2\in T_n}.
	\end{align*}	
	
	In the next step, we want to apply the delta method. For this purpose, we define the function $f:\mathbb{N}^{2n!}\rightarrow\mathbb{R}$ with $f(\textbf{x},\textbf{y})=\sum_{i=1}^{n!}x_iy_i$ with $\textbf{x}=(x_1,\ldots,x_{n!})$ and $\textbf{y}=(y_1,\ldots,y_{n!})$ such that $q=f(q_X,q_Y)$ and $\hat{q}_N=f((\hat{q}_X(\textbf{t}))_{\textbf{t}\in T_n},(\hat{q}_Y(\textbf{t}))_{\textbf{t}\in T_n} )$.

	It remains to show that $f$ is differentiable. To show this, we first have a look at the partial derivatives 
	
	\begin{align*}
		\frac{\partial}{\partial x_i}f(\textbf{x},\textbf{y})=y_i \text{ and } \frac{\partial}{\partial y_i}f(\textbf{x},\textbf{y})=x_i.
	\end{align*}
	Hence, $\Delta f=(x,y)=\left(\begin{array}{ll}\textbf{x} \\ \textbf{y} \end{array}\right)$	which make the delta method applicable. This results in
	\begin{align*}
		\sqrt{N}(\hat{q}_N-q)\rightarrow \mathcal{N}(0,\delta^2)
	\end{align*}
	with 
	
	\begin{align*}
		\delta^2=(\Delta f(q_X,q_Y))^T\Sigma \Delta f(q_X,q_Y)=(q_X^T,q_Y^T)\Sigma\left(\begin{array}{ll}\textbf{x} \\ \textbf{y} \end{array}\right)
	\end{align*}
	which completes the proof.
\end{proof}

Analogous results can be obtained for the sample estimators of $r$ and $s$ using the same techniques. With these results, the limit theorem for $\widehat{ord}(X,Y)$ then follows simply from the delta method using the assumptions from Theorem \ref{thm:limitp}.

Secondly, limit theorems for the sample estimator of the total score are required. Under the same conditions as in the above theorems, it can be shown that
\begin{align*}
	\sqrt{N}(\hat{S}_N-S)\rightarrow N(0,\gamma^2)
\end{align*}
with 
\begin{align*}
	\gamma^2=&\Var(\omega(d_f(\Psi(X_1,\ldots,X_{n})=\Psi(Y_1,\ldots,Y_{n}))))\\
	& +2\sum_{m=2}^\infty \Cov(\omega(d_f(\Psi(X_1,\ldots,X_n)=\Psi(Y_1,\ldots,Y_{n}))),\\
	& \omega(d_f(\Psi(X_m,\ldots,X_{m+n-1})=\Psi(Y_m,\ldots,Y_{m+n-1})))).
\end{align*}
by application of the Central Limit Theorem for functionals of absolutely regular processes of \cite{Ibragimov.1962} and the Cram\'er-Wold device.

\subsection{Limit Theorems in a Multivariate i.i.d. Setting}

Dealing with multivariate data, we also consider the patterns at each point in time. Let $(X_k)_{k\in\bbn}$ be a $d$-dimensional time series. The pattern $\Psi((X_k^1,...,X_k^d))$ describes in a short way, where the highest value, second highest value... in the vector $X_k$ is. While we have dependence within the vector $X_k$, we encounter situations where the time series $(X_k)_{k\in\bbn}$ consists of independent random vectors. Hence, we can use classical limit theorems: 
\[
\frac{1}{N} \sum_{k=1}^N 1_{\{\Psi(X_k)=\bf{t}\} } \to \bbp(\Psi(X_1)=\bf{t}) \hspace{10mm} \bbp\text{-a.s.}
\]
holds by the (strong) law of large numbers. Writing $P_t:=\bbp(\Psi(X_1)=\bf{t})$ we obtain in addition
\[
\frac{1}{\sqrt{N}} \sum_{k=1}^{N} (1_{\{\Psi(X_k)=\bf{t}\} }-P_t) \to N(0,P_t^2)
\]
by the Central Limit Theorem for i.i.d. Bernoulli random variables. These simple convergence theorems are stated here only for the sake of completeness since they are used in the spatial approach of the subsequent section. 

\section{Application and Simulation: Temporal and Spatial Patterns in Flood Classification}
\label{sec:application}

Let us consider the following general situation: we observe $d$-dimensional data vectors at discrete time points $k\in\bbn$. This is modeled as a $d$-dimensional time series $(X_k)_{k\in \bbn}$. We want to analyze the dependence between the one-dimensional time series $(X^{a}_k)_{k\in \bbn}$ and $(X^{b}_k)_{k\in \bbn}$ for $1\leq a,b \leq d$. We will do this using two different approaches which both make use of generalized ordinal patterns. 

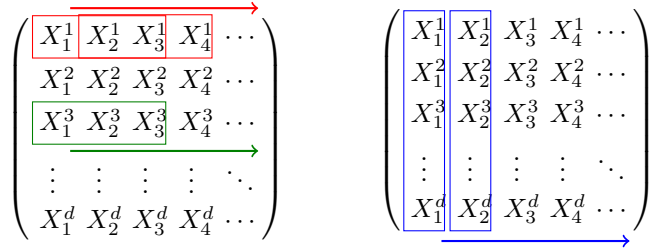
\begin{figure}[h!]
	\begin{center}
		\begin{tikzpicture}
			\def\a{\mathlarger{\mathlarger{\boldsymbol{\ast}}}}
			\def\b{{\bullet}}
			\matrix (m) [matrix of math nodes,
			inner sep=0pt, column sep=0.25em, %row sep=0.1em,
			nodes={inner sep=0.25em,text width=1em,align=center},
			%left delimiter=\lgroup,right delimiter=\rgroup,
			left delimiter=(,right delimiter=),
			]
			{%
				X_1^1 & X_2^1 \textcolor{white}{\big  \vert}& X_3^1 & X_4^1& \cdots  \\
				X_1^2 & X_2^2 & X_3^2 & X_4^2& \cdots \\
				X_1^3 & X_2^3 \textcolor{white}{\big  \vert}& X_3^3 & X_4^3& \cdots \\
				\vdots & \vdots & \vdots & \vdots &  \ddots   \\
				X_1^d & X_2^d & X_3^d & X_4^d& \cdots \\
			};%
			
			% Strich
			%\draw[]  (m-1-9.north west) -- (m-8-9.south west);
			
			%\node[red] at (m-1-1.north west){+};
			% Dreiecksteil
			%\draw[red,] (m-1-1.north west) -- (m-1-5.north east) -- (m-5-5.south east) -- cycle;
			\draw[red,] (m-1-2.north west) rectangle (m-1-4.south east); 
			\draw[red,] (m-1-1.north west) rectangle (m-1-3.south east); 
			%Trapezteil
			%\draw[blue] (m-1-1.north west) rectangle (m-5-1.south east);
			%\draw[blue] (m-1-2.north west) rectangle (m-5-2.south east);
			% Ergänzungsteil
			
			\draw[green!50!black] (m-3-1.north west) rectangle (m-3-3.south east); 
			\draw[red,thick,->] (-1,1.6) -- (1.5,1.6);
			\draw[green!50!black,thick,->] (-1,-0.3) -- (1.5,-0.3);
			% \draw[blue,thick,->] (-1,-2) -- (1.5,-2);
		\end{tikzpicture} \hspace{10mm}
		\begin{tikzpicture}
			\def\a{\mathlarger{\mathlarger{\boldsymbol{\ast}}}}
			\def\b{{\bullet}}
			\matrix (m) [matrix of math nodes,
			inner sep=0pt, column sep=0.25em, %row sep=0.1em,
			nodes={inner sep=0.25em,text width=1em,align=center},
			%left delimiter=\lgroup,right delimiter=\rgroup,
			left delimiter=(,right delimiter=),
			]
			{%
				X_1^1 & X_2^1 & X_3^1 & X_4^1& \cdots  \\
				X_1^2 & X_2^2 & X_3^2 & X_4^2& \cdots \\
				X_1^3 & X_2^3 & X_3^3 & X_4^3& \cdots \\
				\vdots & \vdots & \vdots & \vdots &  \ddots   \\
				X_1^d & X_2^d & X_3^d & X_4^d& \cdots \\
			};%
			
			% Strich
			%\draw[]  (m-1-9.north west) -- (m-8-9.south west);
			
			%\node[red] at (m-1-1.north west){+};
			%\draw[red,] (m-1-2.north west) rectangle (m-1-4.south east); 
			%\draw[red,] (m-1-1.north west) rectangle (m-1-3.south east); 
			%Trapezteil
			\draw[blue] (m-1-1.north west) rectangle (m-5-1.south east);
			\draw[blue] (m-1-2.north west) rectangle (m-5-2.south east);
			% Ergänzungsteil
			%\draw[green!50!black] (m-3-2.north west) rectangle (m-3-4.south east); 
			\%draw[green!50!black] (m-3-1.north west) rectangle (m-3-3.south east); 
			% \draw[red,thick,->] (-1,2) -- (1.5,2);
			%  \draw[green!50!black,thick,->] (-1,-0.5) -- (1.5,-0.5);
			\draw[blue,thick,->] (-1,-1.6) -- (1.5,-1.6);
		\end{tikzpicture}
	\end{center}
	\caption{The temporal approach (left) and the spatial approach (right)}
\end{figure}

I. Temporal approach: we compare the time series $(X^{a}_k)_{k\in \bbn}$ (for $1\leq a \leq d$ pairwise). To this end, we use the ordinal pattern dependence as described above. This allows us to analyze co-monotonic behavior over time. 

II. Spatial approach: we assign the patterns $\Psi(X_k)$ with vectors $X_k$ at each fixed time point $k$. If the probability 
\[
\bbp(\Psi(X^a_1)=\Psi(X^b_1))
\]
is high, this is an indicator for strong dependence between the $a$th and the $b$th component of the vector. However, the spatial pattern contains more information. The frequency of certain patterns can reveal structures in a network.
%or a vector. For example, if the frequency of patterns where two distinct vector entries have similar ordinal values is high, this implies that they are dependent. 
If, for example, the frequency of the unit pattern $(1,1,...,1)$ is high, this implies an overall high dependence between all vector entries. 
%This way, the structure within the network can be detected. 
In the following, the example of flood classes will be introduced. If a gauging network within a river basin is treated as vector, the application of spatial ordinal patterns can deliver information on which gauges often react similarly, due to weather patterns or anthropological impacts.

Flood events within a river basin almost always occur at several gauges at the same time. Due to spatially distributed rainfall, snowmelt and flood wave propagation, flood events show patterns within a basin. Especially weather patterns like the \emph{Vb} weather pattern, that is responsible for almost all large flood events in Germany over the last decades, are characterized by a certain circulation \cite{nissen2013}. Hence, spatial patterns of flood events are investigated by many researchers to understand the extent and possible hazard of extreme flood events \cite{Uhlemann.2010}. Nevertheless, it is not straightforward to compare the observed magnitude of the events at different gauges. The peak of the flood wave, which is mostly used to characterize a flood event and for flood statistics \cite{Salas.2014}, depends much on the catchment size and characteristics and can be influenced by overlaid waves of tributaries. The same holds true for the flood volume. Instead, \cite{Fischer.2018} propose a flood classification. This classification has the advantage that it is distribution-free and ordinal but, since it is an extension of the Chebychev-inequality, still depends on probabilities, which is necessary for flood statistics. In their research, they also detected certain patterns in the classification of single flood events within the river basin, but they have not been evaluated statistically concerning significance. Hence, we will adopt this classification and investigate the significance of the spatial patterns of flood classes with ordinal patterns. 
For this purpose, according to \cite{Fischer.2018}, five flood classes are defined (Table \ref{Tab:floodclass}).

\begin{table}[!h]
	\centering
	\begin{tabular}{c | c c }
		Flood Class & Description & Non-exceedance Probability $p$ of Peak\\
		\hline
		0 & small flood & $p<0.5$\\
		\cellcolor{flood1} 1 & \cellcolor{flood1} normal flood & $0.5\leq p < 0.8$\\
		\cellcolor{flood2} 2 & \cellcolor{flood2} medium flood & $0.8\leq p \leq 0.933$\\
		\cellcolor{flood3} 3 & \cellcolor{flood3} large flood & $0.933\leq p \leq 0.966$\\
		\cellcolor{red} 4 & \cellcolor{red} very large flood & $0.966\leq p $\\
	\end{tabular}
	\caption{Five flood classes to characterize the magnitude of flood events.}
	\label{Tab:floodclass}	
\end{table}

Of course, there does not exist spatial coherence of flood events only, but also temporal patterns can be observed for floods. In hydrology, often so called flood-rich and flood-poor periods are observed. These are defined as significant clusters of floods above or below a certain threshold. Nevertheless, significance of these clusters is hardly detected with common techniques, especially for non-annual data \cite{Lun.2020}. Again, the use of flood classes instead of thresholds in combination with ordinal patterns shall help us to detect such significant patterns in the temporal series.

For our analyses, 14 gauges of the Mulde river basin with an observation period starting in 1926 (due to uniformity) are available, including head as well as estuary catchments. For these gauges, 314 flood events have been separated and classified with the classification described above. As a reference gauge we use the Golzern gauge, being the one at the estuary and hence including the whole basin. For this gauge, the classification resulted in 9 flood events of class 4 (including the extraordinary large floods in 2002 and 2013), 7 flood events of class 3, 39 events of class 2, 56 events of class 1 and 203 events of class 0. Of course, not all events occurred at all gauges. If a certain event cannot be detected at a gauge (meaning that no flood occurred), it is assigned with class 

-1, since it is different from all flood classes. With this assumption, a comparison of all gauges and events is made possible. A map of the study area is provided in Figure \ref{Fig:map}.

\begin{figure}
	\centering
	\includegraphics[width=\textwidth]{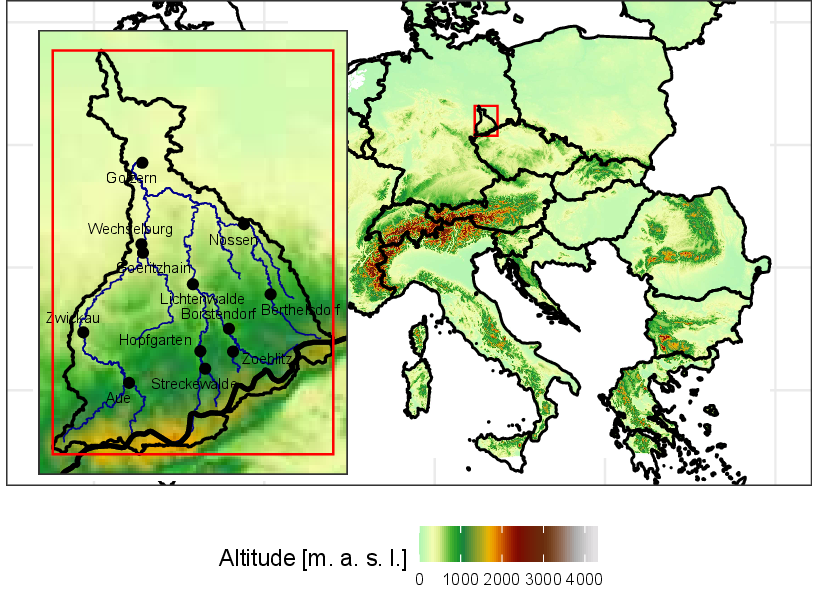}
	\caption{Map of the Mulde River basin in Saxony, Germany and its location in Central Europe.}
	\label{Fig:map}
\end{figure}

%%%%%%%%%%%%%%%%%%%
\subsection{The Temporal Approach}

Temporal coherence between the flood classes of the gauges is investigated similarly to \cite{fis-sch-sch17}, with shifting ordinal patterns of the flood classes of length $n=4$ by one event per step and compare the patterns between the gauges. Here, the metrical approach is used to include small variations in the patterns. All 13 gauges are considered. The metrical coherence (in percent) for all combination of gauges is given in the supplement. The coherence between the gauges in general is high, indicating strong dependencies between the gauges (Table \ref{TabTime}). Spatial patterns become visible. An example for the time series of flood classes for the four main gauges is given in Figure \ref{Fig:TS}a, where clear coherence especially for extreme events of classes 3 and 4 becomes visible.

The comparison value was estimated using $\hat{q}_N$ in combination with the considered metric. The results are given in the supplement. All comparison values are much smaller than the obtained ordinal pattern dependence. This implies significant dependency between all gauges. 

\vspace{0.3cm}

As second comparison, also two time series $(Z_t)_{t\in \mathbb{Z}}$ of type INGARCH(p,q) \cite{Weiss.2018} have been simulated. More precisely, we simulated
$$Z_t\vert \mathcal{F}_{t-1}\sim\text{Poi}(\nu_t),$$

where $\mathcal{F}_{t-1}$ denotes the history of the process up to time $t-1$ with conditional distribution chosen as Poisson. With this definition, the linear predictor is the conditional mean

$\nu_t=\beta_0+\beta_1 Z_{t-i_1}+\ldots + \beta_p Z_{t-i_p}+\alpha_1\nu_{t-j_1}+\ldots+\alpha_q\nu_{t-j_q}+\eta_1X_{t,1}+\ldots +\eta_rX_{t,r}$.

For the simulation, we chose $\beta_0=2$, $p=1$, $\beta_1=0.3$, $q=0$ and $r=0$ (the INAR(1) model) for sake of simplicity.
For the two time series with $t=1000$, the ordinal pattern metrical coherence has been assessed for 1000 repetitions. In mean, a coherence of 22.2\% has been obtained, which is far below all obtained values for the flood time series (Table \ref{TabTime}). Obviously, even an auto-correlated time series does not lead to such high coherence. If we increase the correlation by assuming $\beta_1=0.6$, the coherence stays at 20.4\%. This can also be seen in Figure \ref{Fig:TS}b.

\begin{figure}
	\centering
	\includegraphics[width=\textwidth]{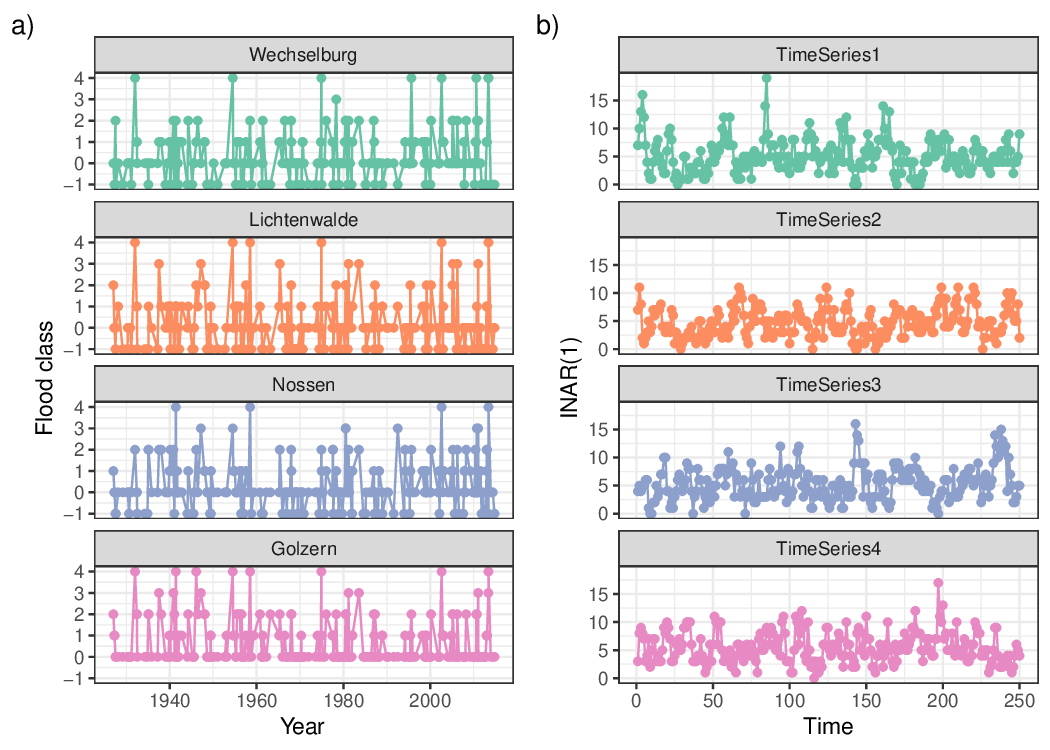}
	\caption{a) Time series  of flood classes of the four main gauges in the Mulde River basin in the years 1927-2015; b) four time series of sample size $n=250$ generated from an INAR(1)-process with $\beta_1=0.6$ without cross-correlation.}
	\label{Fig:TS}
\end{figure}

\begin{table}
	\centering
	\caption{Results of the ordinal pattern dependence for all catchments in the Mulde River basin and simulated INAR(1)-processes. Values are given in \%.}
	\begin{tabular}{l c c c }
		\hline
		Time Series & mean & min & max \\
		\hline
		Mulde flood classes &58.2 & 48.3 & 75.2 \\ 
		INAR(1) with $\beta_1=0.3$ & 18.6 & 26.7 & 22.2\\
		INAR(1) with $\beta_1=0.6$ & 16.5 & 23.9 & 20.4 \\
		\hline
		
	\end{tabular}
	\label{TabTime}
\end{table}

So far, we have only {\it stated} that our method is beneficial to the general handling of ties in the context of ordinal patterns. In the following, we will shortly demonstrate this. For this purpose, two standard handling methods of ties in ordinal patterns are compared with generalized ordinal pattern in terms of ordinal pattern dependence. More precisely, we will use cases b) (randomization of ties) and c) (first appearance) from above, since case a) (the removal from ties) is not meaningful in this context with small patterns and many ties. Again, all 13 gauges from above are compared concerning their ordinal pattern dependence using the metrical approach. To exclude the possibility of dependence of the results on the pattern length, the lengths $n=4$ as well as $n=6$ are used. Since the methodology differs from the one proposed here, the metric has to be adjusted. For these approaches, only even numbers of difference are obtained, hence the metric should only consider these \cite{schnurrdehling2016}. For the case of $n=4$, the metric
\begin{align}\label{weightfkt4}
	w(x):=1\cdot 1_{\{x=0\} } + 0.5 \cdot 1_{\{x=2\} },
\end{align}
is applied, while for $n=6$, we apply

\begin{align}\label{weightfkt3}
	w(x):=1\cdot 1_{\{x=0\} } + 0.75 \cdot 1_{\{x=2\} } + 0.5 \cdot 1_{\{x=4\} } + 0.25 \cdot 1_{\{x=6\} }.
\end{align}

The results are summarized in Table \ref{TabSim}. Detailed results for each combination of gauges are given in the supplement.

\begin{table}
	\centering
	\caption{Results of the ordinal pattern dependence for all 13 gauges with different ties-handling. Values are given in \%.}
	\begin{tabular}{l c c c c c c }
		\hline
		& \multicolumn{3}{c}{$n=4$} & \multicolumn{3}{c}{$n=6$}\\
		Approach & mean & min & max & mean & min & max\\
		\hline
		Generalized Ordinal Pattern Dependence &58 & 48 & 75 & 49 &39 &67\\ 
		Randomized ties & 21 & 13 & 32 & 9.2 & 4.5 & 18\\
		Approach c) & 21 & 7 & 42 & 14 & 7.0 & 30\\
		\hline
		
	\end{tabular}
	\label{TabSim}
\end{table}

The classical ties-handling approaches underestimate the dependence present in the data by altering the ties structure. 
Moreover, the length of the pattern affects the performance of the handling of ties and becomes worse the larger the pattern. This can be explained by the application of the metric. Due to the metrical structure, for small patterns changes in the pattern, as done in the presence of ties, the effect is reduced since these patterns are still similar to other patterns and the metric delivers comparably high values. For larger patterns, this effect decreases.

This application demonstrates the drawbacks of classical ties-handling in ordinal pattern dependence. Due to the changes of the pattern structure, the ordinal pattern dependence is significantly reduced and below the comparison value for significance, hence indicating independence of the time series. With our approach, we overcome these drawbacks in a natural way.

\subsection{The Spatial Approach}

For the spatial approach, we compare the flood classification of the peak discharges at the gauges for each event. Due to computational limits, the maximum of eight gauges (Aue, Zwickau, Wechselburg, Hopfgarten, Zoeblitz, Lichtenwalde, Nossen and Golzern) and the subset consisting only of the estuary Golzern and the three main tributaries Wechselburg, Lichtenwalde and Nossen are considered. This means we obtain 314 ordinal patterns of length $n=8$ respectively $n=4$. For these patterns, a percental frequency is estimated. Of course, it is not only of interest, if there is a coherence between the gauges. This can be assumed for most flood events due to the non-local nature of precipitation. More interestingly, we want to investigate, which pattern occurs most. That is, are there certain gauges that often behave differently than the remaining gauges. For this purpose, we compare the ordinal pattern of each event with all possible theoretical ordinal patterns. The results show, which ordinal pattern occurs most and if this occurrence is significant. The patterns can be assumed to be independent since each pattern belongs to a different flood event which have been separated such that it can be guaranteed that they do not influence each other. Moreover, we have tested the auto-correlation of each time series with the Cram\'er V approach suggested in \cite{Weiss.2018}, where no significant auto-correlation occurred for any of the gauges (with mean values between 0.1 and 0.16 for lags up to $k=100$).

If we only consider the four estuarial catchments Wechselburg, Lichtenwalde, Nossen and Golzern, representing the three tributaries and the estuary, we can observe that the unit pattern $(1,1,1,1)=e_4$ representing the vectors $k\cdot e_4$ $(k=-1,\ldots,4)$ occurs with highest frequency of all patterns, being significantly higher than the theoretical value (Table \ref{TabSpatial}). This implies that in general we have very similar flood classes for events in the tributaries and the estuary and that their occurrence is significantly higher than expected. The second highest frequency can be observed for the pattern where the central tributary has flood class that differs by one from the remaining classes and third highest where the Golzern gauge and exactly one of the tributaries differ by one class from the remaining two gauges. Obviously, the magnitude of a flood at the estuary is determined by the largest flood class of the three tributaries. The fourth most type of pattern seems to be surprising on the first sight: it is the pattern where the estuary gauge has a flood class differing by one class from the remaining gauges. From the hydrological perspective this can be explained by the phenom of overlaid flood waves. Due to temporally similar occurrence of flood event in the tributaries caused by spatially extended rainfall, often flood waves can overlap at the estuary leading to so called superpositioned waves \cite{Bloeschl.2013}. This leads to much larger flood waves in the estuary and hence to much larger flood classes. In theory, it should be a significantly lower coherence and hence this pattern is significant for flood classes.
Non-observed patterns ($0\%$) are those where two gauges show flood classes that differ by more than two classes from the other tributaries. Obviously, the most extreme events indicated by large flood classes do not occur locally in one tributary. Instead, at least moderate floods can be observed for the whole basin. We would expect these patterns to occur in $1\%$ of all patterns. Also non-observed are patterns where only one gauge has the flood class zero ($0\%$). This emphasizes the strong dependence within the basin, leading to similar reactions for the whole basin.

\begin{table}
	\centering
	\caption{Frequency of the most-frequent ordinal patterns of flood classes for the four main gauges in the Mulde River basin and theoretical comparison values. Vectors of flood classes refer to the order (Golzern, Wechselburg, Lichtenwalde, Nossen), $k$ can take integer values between -1 and 4, respectively 0 and 4 for Golzern gauge. Values are given in \%.}
	\begin{tabular}{l l c c  }
		\hline
		Vector of flood classes & Ordinal pattern & observed & theoretical \\
		\hline
		(k,k,k,k) & (1,1,1,1) &58.3 & 48.2 \\ 
		(k,k,k+1,k) & (1,1,2,1) & 21.1 & 13.2  \\
		(k+1,k+1,k,k) & (2,2,1,1) & \multirow{3}{*}{$\left\rbrace\begin{array}{l}
				\\
				8.1\\
				\\
			\end{array}\right.$} & \multirow{3}{*}{3.9}\\
		(k+1,k,k+1,k) & (2,1,2,1) & & \\
		(k+1,k,k,k+1) & (2,1,1,2) & &\\
		(k+1,k,k,k) & (2,1,1,1) &6.4 & 3.5\\
		\hline
		
	\end{tabular}
	\label{TabSpatial}
\end{table}

If we include all gauges, i.e., also headgauges, the results remain similar. Due to the increased possibilities of patterns, the overall proportion of distinct patterns reduces, but still the same significant patterns can be observed. For example, the unit pattern has a frequency of $23.2\%$, where we would expect $7.3\%$. Nevertheless, also new structures within the basin become visible. Interestingly, patterns where only a sub-basin has a flood class one larger than the remaining sub-basins, which is proceeding to the main gauges and the outlet gauge, has highest frequency of $40.16\%$. This implies that the head gauges do not have such that a high dependence as the gauges located further downstream. Weather patterns do not affect those gauges alike, but instead more frequent affect only the downstream parts, further away from the Ore mountains. 

\section{Summary and Conclusions}
The detection of coherence between time series in large data sets is an important and challenging task. There exist various methods to do this for different settings. Here, we have proposed a method based on ordinal patterns. Ordinal patterns offer a simple consideration where only the relative positions of subsequent data points are considered and not the values itself. They have been applied in many disciplines like medicine, finance and hydrology. However, what has been overlooked so far was the handling of ties. Ties occur in many time series and are of particular importance when analyzing categorical data. In this work, we have explicitly considered ties and hence: generalized patterns. New limit theorems under short range dependent conditions were developed. The general concept of short range dependence (absolute regularity) that was used makes an application to many classical time series models possible. The general structure of the proofs using U-Statistics allows for an extension to further dependence concepts.

The methodology then was applied to flood classes in a river basin in Germany. It was shown that there exists a highly significant dependence between the flood classes of neighboring catchments. At the same time, certain spatial patterns where detected that occurred significantly frequent. These patterns were linked to flood-generating processes like the Vb weather pattern, which is typical for this region and often causes extreme floods, and flood superposition.
Finally, a simulation study was performed that demonstrates how the proposed handling of ties is more efficient than the classical treatment of ties (randomization, removal) and does not lead to a loss of information.

Future extensions of ordinal patterns for data with ties could be the consideration of different dependence structures like near-epoch dependence, which would allow an application to more classes of time series models such as GARCH. Moreover, long-range dependent processes could be considered which also may be of interest in the case of hydrological data.

\section*{Acknowledgments}

We would like to thank the Saxon State Office for Environment, Agriculture and Geology for providing the discharge data, which are available from the data portal iDA (www.umwelt.sachsen.de/umwelt/46037).

Funding: This work was supported by the German Research Foundation (DFG) [grant numbers FOR 2416; SCHN 1231-3/2].

\bigskip
\begin{center}
	{\large\bf SUPPLEMENTARY MATERIAL}
\end{center}

\begin{description}
	
	\item[Table S1: Generalized Ordinal Pattern Dependence for Temporal Approach with n=4:] Table with all ordinal pattern dependencies (in percent) estimated with the generalized ordinal pattern approach with pattern length $n=4$. Please note that the results are symmetric.  (.csv file)
	
	\item[Table S2: Generalized Ordinal Pattern Dependence for Temporal Approach with n=6:] Table with all ordinal pattern dependencies (in percent) estimated with the generalized ordinal pattern approach with pattern length $n=6$. Please note that the results are symmetric.  (.csv file)
	
	\item[Table S3: Comp. value of Gen. Ordinal Pattern Dependence for Temporal Approach with n=4:] Table with all comparison values for ordinal pattern dependencies (in percent) estimated with the generalized ordinal pattern approach with pattern length $n=4$. Please note that the results are symmetric.  (.csv file)
	
	\item[Table S4: Comp. value of Gen. Ordinal Pattern Dependence for Temporal Approach with n=6:] Table with all comparison values for ordinal pattern dependencies (in percent) estimated with the generalized ordinal pattern approach with pattern length $n=6$. Please note that the results are symmetric.  (.csv file)
	
	\item[Table S5: OPD with Ties handled by Randomization for Temporal Approach with n=4:] Table with ordinal pattern dependencies (in percent) estimated with the classical ordinal pattern approach with pattern length $n=4$ and ties handled by randomization (case b). Please note that the results are symmetric.  (.csv file)
	
	\item[Table S6: OPD with Ties handled by Randomization for Temporal Approach with n=6:] Table with ordinal pattern dependencies (in percent) estimated with the classical ordinal pattern approach with pattern length $n=6$ and ties handled by randomization (case b). Please note that the results are symmetric.  (.csv file)
	
	\item[Table S7: OPD with Ties handled by Case c for Temporal Approach with n=4:] Table with ordinal pattern dependencies (in percent) estimated with the classical ordinal pattern approach with pattern length $n=4$ and ties handled by case c. Please note that the results are symmetric.  (.csv file)
	
	\item[Table S8: OPD with Ties handled by Case c for Temporal Approach with n=6:] Table with ordinal pattern dependencies (in percent) estimated with the classical ordinal pattern approach with pattern length $n=6$ and ties handled by case c. Please note that the results are symmetric.  (.csv file)
	
\end{description}

%%% literature %%%%%%%%%%%%%%%%%%%%%%%%%%%%%%%%%%%%%%%%%%%%%%%%%%%%%%%%%%%%%%%%%%%%%%%%
%\section*{References}
\bibliography{References}

\end{document}